\documentclass[11pt]{article}
\usepackage{amssymb,latexsym,amsmath,amsbsy,amsthm}
\usepackage{cite}
\usepackage{stmaryrd}  
\usepackage{arydshln}  
\usepackage{color}
\newtheorem{theo}{Theorem}

\newtheorem{prop}[theo]{Proposition}
\def\nn{\nonumber}

\def\qdots{\mathinner{\mkern1mu\raise1pt\vbox{\kern7pt\hbox{.}}\mkern2mu \raise4pt\hbox{.}\mkern2mu\raise7pt\hbox{.}\mkern1mu}}

\def\N{{\mathbb N}}

\def\gl{\mathfrak{gl}}

\def\h{\mathfrak{h}}
\def\so{\mathfrak{so}}

\def\osp{\mathfrak{osp}}

\def\t{\theta}
\newcommand{\basis}[3]{\left| \!\!\begin{array}{ll}{#1}{#2}\\[-1mm]{#3}\end{array}\!\!\! \right)}
\newcommand{\myatop}[2]{\genfrac{}{}{0pt}{}{#1}{#2}}

\allowdisplaybreaks 
\begin{document}
\begin{center}
{\Large \bf
The parity operator for parafermions and parabosons} \\[5mm] 
{\bf N.I.~Stoilova}\footnote{E-mail: stoilova@inrne.bas.bg}\\[1mm] 
Institute for Nuclear Research and Nuclear Energy, Bulgarian Academy of Sciencies,\\ 
Boul.\ Tsarigradsko Chaussee 72, 1784 Sofia, Bulgaria\\[2mm] 
{\bf J.\ Van der Jeugt}\footnote{E-mail: Joris.VanderJeugt@UGent.be}\\[1mm]
Department of Mathematics, Computer Science and Statistics, Ghent University,\\
Krijgslaan 281-S9, B-9000 Gent, Belgium.
\end{center}

\vskip 2 cm

\begin{abstract}
\noindent 
In this paper we reexamine the definition of parafermions and parabosons by means of Green's triple relations,
and extend these relations by including a parity operator $P$ which is also determined by means of triple relations.
As a consequence, we are dealing with new algebraic structures.
It is shown that the algebra underlying a set of $n$ parafermions together with $P$ is the orthogonal Lie algebra $\so(2n+2)$.
The Fock spaces correspond to particular irreducible representations of $\so(2n+2)$, and the action of $P$ in 
these spaces leads to interesting observations.
Next, we show that the algebra underlying a set of $n$ parabosons together with $P$ is the orthosymplectic 
Lie superalgebra $\osp(2|2n)$.
In this case, the Fock spaces correspond to certain irreducible infinite-dimensional representations of $\osp(2|2n)$.
Both for parafermions and parabosons the spectrum of $P$ is
closely related to the so-called order of statistics $p$, introduced by Green.
\end{abstract}

\vskip 10mm


\setcounter{equation}{0}
\section{Introduction}%

For ordinary bosons or fermions, the parity operator $P$ takes the form $(-1)^F$, where $F$ is the number operator
counting the number of particles in a state of the Fock space. 
Each basis state of the boson or fermion Fock space describes a fixed number of particles, so the parity operator 
takes the value $+1$ for a state with an even number of particles and $-1$ for a state with an odd number of particles.

For parabosons and parafermions, the basis states of the Fock spaces are more complicated, 
and there is no straightforward definition of a number operator.
Since there is more than one definition of parafermions (and parabosons), we should mention that 
in this paper we follow the definition of Green~\cite{Green}.
In Green's approach, the creation and annihilation operators for a set of parafermions or parabosons
are required to satisfy certain cubic or triple relations.
These triple relations generalize the common quadratic anti-commutation relations for fermions 
and quadratic commutation relations for bosons.
Green's triple relations lead to interesting algebraic structures:
the algebra generated by $2n$ creation and annihilation operators subject to the parafermion triple relations
is known to be the Lie algebra $\so(2n+1)=B_n$~\cite{KT, Ryan},
and the algebra generated by $2n$ creation and annihilation operators subject to the paraboson triple relations
is known to be the Lie superalgebra $\osp(1|2n)=B(0,n)$~\cite{Ganchev} 
(we follow Kac’s notation for Lie superalgebras~\cite{Kac}).

The Fock spaces for parafermions and parabosons are characterized or labeled by a positive integer~$p$, 
the so-called ``order of statistics''~\cite{Green, GM}.
For parafermions, these Fock spaces, denoted by $W(p)$, are irreducible finite-dimensional unitary representations 
of the Lie algebra $\so(2n+1)$.
For parabosons, the Fock spaces $V(p)$ are irreducible infinite-dimensional unitary lowest weight representations 
of the Lie superalgebra $\osp(1|2n)$.
The structure of these Fock spaces, and in particular the action of creation and annihilation operators on
basis vectors of the Fock spaces, is complicated.
It was determined in general for parafermions in~\cite{parafermion} and for parabosons in~\cite{paraboson}.

In the current paper we investigate the possibility of introducing a parafermion and paraboson parity operator $P$.
Following the approach of Green, this should be done by postulating some algebraic relations for $P$.
Since the parafermion or paraboson creation and annihilation operators are determined by triple relations,
it is logical to inspect the triple relations satisfied by ordinary fermion/boson creation and annihilation 
operators and their parity operator $P$, and propose these as defining relations for the 
parafermion/paraboson parity operator.
It is important to emphasize that this approach is natural in the context of parafermions and parabosons,
and it leads to an operator $P$ with interesting properties.
But the operator $P$ introduced by means of these algebraic relations does no longer satisfy $P^2=1$ (except in
one representation).
Despite this, we call it the parafermion or paraboson parity operator.

For parafermions (treated in Sections~2 and~3), this approach leads to a set of triple relations for $2n+1$ operators:
the $2n$ parafermion creation and annihilation operators $a_i^\pm$ ($i=1,\ldots,n$) and the parafermion parity operator $P$.
The algebra generated by these $2n+1$ elements subject to the triple relations is examined in Section~2,
and appears to be the Lie algebra $\so(2n+2)=D_{n+1}$.
It is quite surprising that $\so(2n+2)$ is generated by such a small number of generators, subject to simple triple relations.
Next, we turn to the study of the $\so(2n+2)$ parafermion Fock spaces.
These spaces are defined in a natural way, following the definition of $\so(2n+1)$ parafermion Fock spaces.
It turns out that for every positive integer~$p$ there are two Fock spaces $W_+(p)$ and $W_-(p)$, 
closely related to each other.
Our main computational result is the determination of the action of $P$ in a Gelfand-Zetlin (GZ) basis of $W_+(p)$ 
and $W_-(p)$. 
Although this action itself is quite simple, the proof is hard due to the complicated actions of $a_i^\pm$.
The outcome itself is attractive and interesting: the eigenvalues of the parafermion parity operator $P$ 
are related to the order of statistics~$p$. 
The spectrum of $P$ in $W_+(p)$ and in $W_-(p)$ is equal to $\{-p,-p+2,\ldots,p\}$. 
So when the order of statistics $p$ is equal to~1, in which case it is known that parafermions collapse to 
ordinary fermions, the parafermion parity operator yields the eigenvalues $\pm 1$ of the ordinary parity operator.
In order to illustrate the general situation, we give an example of the Fock space for $p=2$, including the 
actions of the parafermion creation and annihilation operators and of the parafermion parity operator.

For parabosons (treated in Sections~4 and~5), we follow a similar approach.
In this case, the algebra generated by $2n$ paraboson creation and annihilation operators $a_i^\pm$
and the paraboson parity operator $P$ is analysed in Section~4 and 
turns out to be the orthosymplectic Lie superalgebra $\osp(2|2n)=C(n+1)$.
Also this is a surprisingly simple mathematical result, namely the algebraic description of $\osp(2|2n)$ by only $2n+1$
generators subject to easy triple relations.
The $\osp(2|2n)$ Fock spaces are defined in Section~5, and for every positive integer~$p$ there
are two Fock spaces $V_+(p)$ and $V_-(p)$. Both of these are irreducible infinite-dimensional unitary lowest weight representations of the Lie superalgebra $\osp(2|2n)$.
Also here, the main computational result is the determination of the action of $P$ 
in a Gelfand-Zetlin basis of $V_+(p)$ and $V_-(p)$. 
As for the parafermion case, the computation is hard but the outcome is simple.
The eigenvalues of the paraboson parity operator $P$ are also related to the order of statistics~$p$,
and given by $\{-p,-p+2,\ldots,p\}$.  
When the order of statistics $p$ is equal to~1, in which case parabosons coincide with bosons, 
the paraboson parity operator yields again the eigenvalues $\pm 1$ of the boson parity operator.

\setcounter{equation}{0}
\section{Parafermions as generators of $\so(2n+2)=D_{n+1}$}%

A system of $n$ fermion creation and annihilation operators $a_i^\pm$ ($i=1,\ldots,n$) is determined by the anti-commutation relations
\begin{equation}
 \{ a_i^+,a_j^+\} = \{ a_i^-,a_j^-\} = 0, \quad \{ a_i^-,a_j^+\} = \delta_{ij}, \quad (i,j\in \{1,\ldots,n\}).
\label{aa}
\end{equation}
The fermion Fock space is generated by a vacuum vector $|0\rangle$ satisfying
\begin{equation}
a_i^- |0\rangle =0\qquad (i\in \{1,\ldots,n\}),  \qquad \langle 0 | 0\rangle = 1,
\end{equation}
and the hermiticity conditions $(a_i^\pm)^\dagger = a_i^\mp$.
A set of orthonormal basis vectors of the Fock space is given by
\begin{equation}
|\t\rangle = |\t_1,\ldots,\t_n\rangle= (a_1^+)^{\t_1}\ldots (a_n^+)^{\t_n} |0\rangle,
\quad \t_i\in\{ 0,1\},
\label{theta}
\end{equation}
and the explicit action of the operators $a_i^\pm$ on this basis is well known:
\begin{align}
 a_i^- |\t\rangle &= \t_i (-1)^{\t_1+\cdots+\t_{i-1}}  |\t_1,\ldots,\t_{i-1},\t_i-1,\t_{i+1},\ldots,\t_n\rangle \nonumber\\
 a_i^+ |\t\rangle &= (1-\t_i) (-1)^{\t_1+\cdots+\t_{i-1}} |\t_1,\ldots,\t_{i-1},\t_i+1,\t_{i+1},\ldots,\t_n\rangle.
\label{f-action}
\end{align}
In this Fock space, the number operator is $F=\sum_{i=1}^n a_i^+a_i^-$, with action
\begin{equation}
F |\t\rangle = (\sum_{i=1}^n \t_i)\, |\t\rangle,
\end{equation}
and the so-called fermion parity operator $P$ takes the form $P=(-1)^F$. $P$ has eigenvalue $+1$ on states with an even number of fermions, 
and $-1$ on states with an odd number of fermions.
It is clear that
\begin{equation}
\{P,P\}=2,\qquad \{ P, a_i^\pm\}=0\qquad (i\in \{1,\ldots,n\}).
\label{ha}
\end{equation}

When the set of $2n+1$ operators $a_i^\pm$ ($i=1,\ldots,n$) and $P$ satisfy the quadratic relations~\eqref{aa} and~\eqref{ha}, 
then it is easy to verify that they also satisfy the following cubic or triple relations:
\begin{align}
& [[a_{i}^{\xi}, a_{j}^{\eta}], a_{k}^{\epsilon}]=
|\epsilon -\eta| \delta_{jk} a_{i}^{\xi} - |\epsilon -\xi| \delta_{ik}a_{j}^{\eta}, 
\label{a-rels} \\
& [[P,a_j^\eta],P]=-4 a_j^\eta, \label{ahh}\\
& [[a_i^\xi,P],a^\epsilon_k]= - |\epsilon-\xi| \delta_{ik} P, \label{aah}
\end{align}
where $i,j,k\in \{1,2,\ldots,n\}$ and $\eta, \epsilon, \xi \in\{+,-\}$ (to be interpreted as $+1$ and $-1$
in the algebraic expressions $\epsilon -\xi$ and $\epsilon -\eta$).

Following the ideas of Green~\cite{Green}, let us now consider a {\em new system} of $2n+1$ generators $a_i^\pm$ ($i=1,\ldots,n$) and $P$ 
which are no longer required to satisfy~\eqref{aa} and~\eqref{ha}, 
but are required to satisfy the triple relations~\eqref{a-rels}-\eqref{aah}.
The elements satisfying~\eqref{a-rels} are known as parafermion creation and annihilation operators~\cite{Green}.
Using similar terminology, we could call the element $P$ satisfying~\eqref{ahh}-\eqref{aah} the parafermion parity operator.

It is well known that the Lie algebra generated by the elements $a_i^\pm$ ($i=1,\ldots,n$) subject to the triple relations~\eqref{a-rels} is the simple Lie algebra $\so(2n+1)=B_n$~\cite{KT, Ryan}.
But what is now the algebraic structure determined by the $2n+1$ generators $a_i^\pm$ ($i=1,\ldots,n$) and $P$ subject to the relations~\eqref{a-rels}-\eqref{aah}?
We have the following result:
\begin{prop}
The Lie algebra generated by the $2n+1$ elements $a_i^\pm$ ($i=1,\ldots,n$) and $P$, subject to the triple relations~\eqref{a-rels}-\eqref{aah} is the simple Lie algebra $\so(2n+2)=D_{n+1}$.
\label{prop1}
\end{prop}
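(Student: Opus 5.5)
The plan is to realize the generators inside $\so(2n+2)$ explicitly, verify the triple relations there, and then show that the abstract Lie algebra defined by the relations is no bigger. The proof therefore has two halves: a realization (surjection onto $\so(2n+2)$) and an upper bound on the dimension of the abstract algebra.

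\emph{Realization.} Let $\so(2n+2)$ act on $\mathbb{C}^{2n+2}$ with basis $e_1,\dots,e_{2n+2}$ and symmetric bilinear form $\langle e_r,e_s\rangle=\delta_{rs}$. Put $E_{rs}=e_{rs}-e_{sr}$ (antisymmetric matrix units), so that
\[
[E_{rs},E_{tu}]=\delta_{st}E_{ru}-\delta_{rt}E_{su}-\delta_{su}E_{rt}+\delta_{ru}E_{st}.
\]
The known $\so(2n+1)$ realization of parafermions uses $a_j^\pm$ as suitably normalized combinations $E_{2n+1,2j-1}\pm iE_{2n+1,2j}$, involving index $2n+1$. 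I will take $P=c\,E_{2n+1,2n+2}$ with a constant $c$ fixed by \eqref{ahh}, and keep the $a_j^\pm$ as above. Then:
\begin{itemize}
\item $[P,a_j^\eta]$ is a combination of $E_{2n+2,\,\cdot}$, and bracketing again with $P$ returns $-(\text{const})\,a_j^\eta$. This fixes $c$, and the sign works out because $\ad(E_{2n+1,2n+2})$ acts as a rotation generator.
\item $[[a_i^\xi,P],a_k^\epsilon]$ is a combination of $E_{2n+1,2n+2}$, with coefficient proportional to $\delta_{ik}|\epsilon-\xi|$, as \eqref{aah} requires.
\item The relations \eqref{a-rels} hold by the known $\so(2n+1)$ result.
\end{itemize}
Next, brackets of the $a$'s generate the $\so(2n+1)$ on indices $1,\dots,2n+1$, and $[a_j^\pm,P]$ supply all $E_{k,2n+2}$. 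Hence the image is all of $\so(2n+2)$, which gives a surjection from the abstract algebra $L$ onto $\so(2n+2)$.

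\emph{Upper bound.} Let $L$ be the Lie algebra freely generated by $a_i^\pm,P$ modulo \eqref{a-rels}--\eqref{aah}. I will show $\dim L\le (n+1)(2n+1)$ by exhibiting a spanning set:
\begin{itemize}
\item the generators $a_i^\pm$ and $P$;
\item the brackets $[a_i^\xi,a_j^\eta]$;
\item the elements $b_i^\xi=[a_i^\xi,P]$.
\end{itemize}
This gives $2n+1+n(2n-1)+n$ distinct elements after antisymmetry, where $[a_i^\xi,a_j^\eta]$ contributes $\binom{2n}{2}+n$ up to linear dependencies. Matching this against $\dim\so(2n+2)=(n+1)(2n+1)$ requires care, and I will simply check the span against the basis $E_{rs}$.

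The key point is that the span $S$ of these elements is closed under bracketing with the generators. Brackets of generators with generators and with $[a,a]$ are handled directly by \eqref{a-rels}, \eqref{ahh}, \eqref{aah}. The nontrivial cases are the brackets
\[
[b_i^\xi,a_k^\epsilon],\qquad [b_i^\xi,P],\qquad [[a_i^\xi,a_j^\eta],b_k^\epsilon],\qquad [b_i^\xi,b_k^\epsilon].
\]
The first two are given by \eqref{aah} and \eqref{ahh}. The others follow from the Jacobi identity: for example,
\[
[[a_i,a_j],[a_k,P]]=[[[a_i,a_j],a_k],P]+[a_k,[[a_i,a_j],P]],
\]
and $[[a_i,a_j],P]$ is rewritten via Jacobi as $[[a_i,P],a_j]+[a_i,[a_j,P]]$, which lies in $\mathbb{C}P$ by \eqref{aah}. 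Similarly, $[b,b]$ is computed by expanding one $b$ and using the previous closures. Since $S$ contains the generators and is closed under $\ad$ of the generators, $L=S$.

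Combining the two halves, the surjection $L\to\so(2n+2)$ must be an isomorphism. Simplicity of $D_{n+1}$ for $n\ge 1$ is standard, with $n=1$ giving $\so(4)$, which is not simple. So either the statement implicitly takes $n\ge 2$, or it should be read as $D_2$ in the usual convention; I would note this.

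The main obstacle is the closure computation, especially $[b_i^\xi,b_k^\epsilon]$, where one must check that the result lands in $\operatorname{span}[a,a]$ rather than producing new elements. The other delicate point is getting the constant $c$ and the signs in the realization right so that \eqref{ahh} has exactly $-4$.
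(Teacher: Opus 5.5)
Your proposal is correct and is essentially the paper's proof. The paper likewise spans the algebra by $a_i^\xi$, $P$, $b_i^\xi$ and the brackets $[a_i^\xi,a_j^\eta]$ (its $F_{ij},P_{ij},N_{ij}$), and then realizes the generators in $\so(2n+2)$ with $P=2h_{n+1}$ in the Cartan subalgebra; this is your $P=c\,E_{2n+1,2n+2}$ written in a split basis, but note $c=\pm 2i$ (a rotation generator gives $[[E,a],E]=+a$), and \eqref{a-rels} and \eqref{aah} both force $a_j^\pm=i(E_{2n+1,2j-1}\pm iE_{2n+1,2j})$ up to sign. Only your count needs repair: the $b_i^\xi$ contribute $2n$ and the $[a_i^\xi,a_j^\eta]$ exactly $\binom{2n}{2}$, so the spanning set has $2n^2+3n+1=\dim\so(2n+2)$ elements, and this cardinality bound (not a comparison with the $E_{rs}$) is what gives $\dim L\le\dim\so(2n+2)$; also, closure under $\mathrm{ad}$ of the generators needs only the three relations plus $[P,[a_i^\xi,a_j^\eta]]=0$ (Jacobi with \eqref{aah}), so $[b,b]$ is no obstacle, and your caveat that $\so(4)$ is not simple for $n=1$ is correct.
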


In order to prove this, let us first construct a basis of the Lie algebra generated by these $2n+1$ elements.
Since the relations are triple relations, and of course the Jacobi identity holds, we need to consider only linear and quadratic expressions in the generators.
We denote the independent quadratic elements as follows:
\begin{align}
& b_i^\xi = \frac12\xi\, [P,a_i^\xi] \qquad(\xi=\pm,\ i=1,\ldots,n),\label{bi}\\
& F_{ij}=\frac12 [a_i^+,a_j^-] \qquad (i,j=1,\ldots,n),\\
& P_{ij}=\frac12 [a_i^+,a_j^+] \qquad (1\leq i < j \leq n),\\
& N_{ij}=\frac12 [a_i^-,a_j^-] \qquad (1\leq i < j \leq n).
\end{align}
This brings the total number of basis elements ($a_i^\xi$, $P$, $b_i^\xi$, $F_{ij}$, $P_{ij}$, $N_{ij}$ respectively)
of the Lie algebra to 
\begin{equation}
2n+1 + 2n + n^2 + n(n-1)/2 + n(n-1)/2 = 2n^2+3n+1 = \frac12 (2n+2)(2n+1),
\end{equation}
which is indeed the dimension of $\so(2n+2)$.
Using only the triple relations~\eqref{a-rels}-\eqref{aah}, the Lie brackets between all these basis elements can be determined:
\begin{align}
& [P,b^\xi_i]=2\xi a^\xi_i, \quad [a_i^\xi,b_j^\xi]=0, \quad [a_i^\xi,b_j^{-\xi}]=\xi\delta_{ij} P, \\
& [b_i^\xi, b_j^\eta] =-\xi\eta [a_i^\xi,a_j^\eta] \ (\sim \pm 2 F_{ij}\hbox{ or } \pm 2 P_{ij}\hbox{ or } \pm 2 N_{ij}),\\
& [P,F_{ij}]=0, \quad  [P,P_{ij}]=0, \quad  [P,N_{ij}]=0, \\
& [F_{ij}, a^+_k] = \delta_{jk} a^+_i, \quad [F_{ij}, a^-_k] = -\delta_{ik} a^-_j, \\
& [P_{ij}, a^+_k] = 0, \quad [P_{ij}, a^-_k] = \delta_{jk}a^+_i -\delta_{ik} a^+_j, \\
& [N_{ij}, a^+_k] = \delta_{jk} a^-_i-\delta_{ik} a^-_j, \quad [N_{ij}, a^-_k] = 0, \\
& [F_{ij}, b^+_k] = \delta_{jk} b^+_i, \quad [F_{ij}, b^-_k] = -\delta_{ik} b^-_j, \\
& [P_{ij}, b^+_k] = 0, \quad [P_{ij}, b^-_k] = -\delta_{jk}b^+_i +\delta_{ik} b^+_j, \\
& [N_{ij}, b^+_k] = -\delta_{jk} b^-_i+\delta_{ik} b^-_j, \quad [N_{ij}, b^-_k] = 0, \\
& [F_{ij},F_{kl}] = \delta_{jk}F_{il}-\delta_{il} F_{kj},\\
& [F_{ij},P_{kl}] = \delta_{jk}P_{il}+\delta_{jl} P_{ki},\\
& [F_{ij},N_{kl}] = -\delta_{ik}N_{jl}-\delta_{il} N_{kj},\\
& [N_{ij},P_{kl}] = -\delta_{jk}F_{li}+\delta_{ik} F_{lj}+\delta_{jl} F_{ki}-\delta_{il}F_{kj},\\
& [N_{ij},N_{kl}] = [P_{ij}, P_{kl}]=0.
\end{align}
Note that the elements $b_i^\xi$ satisfy
\begin{equation}
[[b_{i}^{\xi}, b_{j}^{\eta}], b_{k}^{\epsilon}]=
|\epsilon -\eta| \delta_{jk} b_{i}^{\xi} - |\epsilon -\xi| \delta_{ik}b_{j}^{\eta},
\end{equation}
so the $b_i^\xi$ are also parafermions.

Let us now turn to the proof of Proposition~\ref{prop1}.
\begin{proof}
In order to show that the algebra under consideration is equal to $\so(2n+2)$, let us work with the following matrix representation of $\so(2n+2)$.
Let $B$ be the symmetric $(2n+2)\times(2n+2)$ matrix:
\begin{equation}
B=\left( \begin{array}{cccc} 0 & I_n & 0 & 0\\ I_n & 0 & 0 &0\\ 0 & 0 & 0 & 1\\ 0 & 0 & 1 & 0 \end{array} \right),
\end{equation}
where $I_n$ is the identity matrix of size $n\times n$.
The Lie algebra $\so(2n+2)$ can be defined as consisting of the matrices $X$ satisfying
\begin{equation}
X^t B + B X=0,
\end{equation}
where $X^t$ is the notation for the transpose of $X$.
Explicitly, these matrices are of the following form
\begin{equation}
X=\left( \begin{array}{cccc} a & b & x & y \\ d & -a^t & x' & y' \\ -{y'}^t & -y^t & f & 0 \\ -{x'}^t & -x^t & 0 & -f \end{array} \right),
\end{equation}
where $a$ is a $n\times n$ matrix, $b$ and $d$ are antisymmetric $n\times n$ matrices, $x,y,x',y'$ are $n\times 1$ matrices, and $f$ is a number.
Using the common unit matrices $e_{ij}$ (consisting of all zeros, except a 1 at position $(i,j)$), a basis of the Cartan subalgebra $\h$ of $\so(2n+2)$ is given by the elements
\begin{equation}
h_i=e_{ii}-e_{n+i,n+i}\quad (1\leq i\leq n), \quad h_{n+1}=e_{2n+1,2n+1}-e_{2n+2,2n+2}.
\end{equation}
Let us denote, as usual, the basis of $\h^*$ dual to $\{h_1, h_2, \ldots, h_{n+1}\}$ by $\{\epsilon_1,\epsilon_2,\ldots,\epsilon_{n+1}\}$.
Then the following elements are the root vectors of $\so(2n+2)$:
\begin{equation}
\begin{array}{lll}
e_{jk}-e_{n+k,n+j} & 1\leq j,k \leq n,\ j\ne k \qquad & \hbox{root: } \epsilon_j-\epsilon_k \\
e_{j,n+k}-e_{k,n+j} & 1\leq j<k \leq n & \hbox{root: } \epsilon_j+\epsilon_k \\
e_{n+j,k}-e_{n+k,j} & 1\leq j<k \leq n & \hbox{root: } -\epsilon_j-\epsilon_k \\
X_j:=e_{j,2n+1}-e_{2n+2,n+j} \qquad & 1\leq j \leq n & \hbox{root: } \epsilon_j-\epsilon_{n+1}\\
Y_j:=e_{j,2n+2}-e_{2n+1,n+j} & 1\leq j \leq n & \hbox{root: } \epsilon_j+\epsilon_{n+1}\\
X_j':=-e_{n+j,2n+1}+e_{2n+2,j} & 1\leq j \leq n & \hbox{root: } -\epsilon_j-\epsilon_{n+1}\\
Y_j':=-e_{n+j,2n+2}+e_{2n+1,j} & 1\leq j \leq n & \hbox{root: } -\epsilon_j+\epsilon_{n+1}
\end{array}
\end{equation}

Now it is a simple matrix computation to show that the following matrices
\begin{equation}
a^+_i=X_i+Y_i,\quad a^-_i=X_i'+Y_i', \quad P=2 h_{n+1}
\end{equation}
satisfy the triple relations~\eqref{a-rels}, \eqref{ahh} and~\eqref{aah}. 
Hence, this proves Proposition~\ref{prop1}. 
\end{proof}

Note that the generators $a_i^\xi$ of $\so(2n+2)$ are {\em not} root vectors of $\so(2n+2)$, but linear combinations of two root vectors.
The extra generator $P$ is just an element of the Cartan subalgebra.

Observe that Proposition~\ref{prop1} is also interesting purely from the mathematical point of view.
It gives a nice presentation of the Lie algebra $\so(2n+2)$ in terms of $2n+1$ generators subject to a set
of triple relations~\eqref{a-rels}-\eqref{aah}. 
It is an interesting alternative of defining $\so(2n+2)$ in terms of the $3n+3$ Chevalley generators subject to the
Chevalley-Serre relations.

\setcounter{equation}{0}
\section{The $\so(2n+2)$ parafermion Fock space}

The parafermion Fock space $W(p)$ (with $p$ a positive integer) 
is the Hilbert space with unique vacuum vector $|0\rangle$, 
defined by means of ($j,k=1,2,\ldots,n$)~\cite{Green, GM}
\begin{align}
& \langle 0|0\rangle=1, \qquad a_j^- |0\rangle = 0, \qquad (a_j^\pm)^\dagger = a_j^\mp,\nn\\
& [a_j^-,a_k^+] |0\rangle = p\,\delta_{jk}\,|0\rangle,
\label{pFock}
\end{align}
and by irreducibility under the action of the algebra generated by
the elements $a_j^+$, $a_j^-$ ($j=1,\ldots,n$), subject to the triple relations~\eqref{a-rels}.
The parameter $p$ is known as the order of  statistics of the parafermion system.
It is well established that the parafermion Fock space $W(p)$ is the unitary irreducible representation of
$\so(2n+1)$ with lowest weight $(-\frac{p}{2}, -\frac{p}{2},\ldots, -\frac{p}{2})$.
The highest weight of $W(p)$ is $(\frac{p}{2}, \frac{p}{2},\ldots, \frac{p}{2})$, with Dynkin label $[0,\ldots,0,p]$.
A Gelfand-Zetlin basis for $W(p)$ was determined in~\cite{parafermion}, and consists of the vectors
\begin{equation}
 |m)\equiv |m)^n\equiv \left|
\begin{array}{lcllll}
 m_{1n} & \cdots & \cdots & m_{n-1,n} & m_{nn}  \\
 m_{1,n-1} & \cdots & \cdots &  m_{n-1,n-1}  &  \\
\vdots & \qdots & & & \\
m_{11} & & & &
\end{array}
\right) 
= \left| \begin{array}{l} [m]^n \\[2mm] |m)^{n-1} \end{array} \right) \;.
\label{mn}
\end{equation}
The top line of this pattern, also denoted by the $n$-tuple $[m]^n$, is any
partition $\lambda$ (consisting of $n$ non-increasing non-negative integers). 
The remaining $n-1$ lines of the pattern will sometimes be denoted by $|m)^{n-1}$.
All $m_{ij}$ in the above GZ-pattern are non-negative integers,
satisfying the {\em betweenness conditions}
\begin{equation}
m_{i,j+1}\geq m_{ij}\geq m_{i+1,j+1}\qquad (1\leq i\leq j\leq n-1).
\label{between}
\end{equation}
Moreover, these vectors $|m)$ should satisfy $m_{1n}\leq p$. In other words, the top line of $|m)$ is a
partition $\lambda$ with largest part not exceeding $p$.
The action of the Cartan algebra elements $h_k$ ($k=1,\ldots,n$) of $\so(2n+1)$ is~\cite[Eq.~(4.12)]{parafermion}:
\begin{equation}
h_{k}|m)=\left(-\frac{p}{2}+\sum_{j=1}^k m_{jk}-\sum_{j=1}^{k-1} m_{j,k-1}\right)|m), 
\quad (1\leq k\leq n). \label{h_k} \\
\end{equation}
The matrix elements of the parafermion generators,
\begin{equation}
(m'| a_i^\xi | m),
\label{a-action}
\end{equation}
are very complicated, and were determined explicitly in~\cite[Eq.~(4.25)-(4.28)]{parafermion}.

Let us now extend this representation and these actions to representations of $\so(2n+2)$.

In general, it is known~\cite{Murnaghan}
 that the $\so(2n+2)$ representation $W_+(p)$ with Dynkin label $[0,\ldots,0,p]$, or highest weight
$(\frac{p}{2}, \ldots,\frac{p}{2}, \frac{p}{2}) = \sum_{i=1}^{n+1} \frac{p}{2}\epsilon_i$, decomposes to the single $\so(2n+1)$
representation with highest weight $(\frac{p}{2}, \ldots, \frac{p}{2}) = \sum_{i=1}^n \frac{p}{2}\epsilon_i$, i.e.\ to $W(p)$.
The lowest weight of $W_+(p)$ is $(-\frac{p}{2}, \ldots,-\frac{p}{2}, -\frac{p}{2})$ when $n$ is odd, 
and is $(-\frac{p}{2}, \ldots,-\frac{p}{2}, \frac{p}{2})$ when $n$ is even.

Moreover, the $\so(2n+2)$ representation $W_-(p)$ with Dynkin label $[0,\ldots,p,0]$, or highest weight
$(\frac{p}{2}, \ldots, \frac{p}{2},-\frac{p}{2}) = \sum_{i=1}^{n} \frac{p}{2}\epsilon_i-\frac{p}{2}\epsilon_{n+1}$, also decomposes to the single $\so(2n+1)$
representation with highest weight $(\frac{p}{2}, \ldots, \frac{p}{2}) = \sum_{i=1}^n \frac{p}{2}\epsilon_i$, i.e.\ to $W(p)$.
In this case,  the lowest weight of $W_-(p)$ is $(-\frac{p}{2}, \ldots,-\frac{p}{2}, \frac{p}{2})$ when $n$ is odd, 
and is $(-\frac{p}{2}, \ldots,-\frac{p}{2}, -\frac{p}{2})$ when $n$ is even.

Since the representation space $W_\pm(p)$ of $\so(2n+2)$ is the same as the representation space $W(p)$ of $\so(2n+1)$,
we can keep the same notation $|m)$ for the basis vectors of $W_+(p)$ and $W_-(p)$, 
with the same action of the generators $a_i^\xi$ on these vectors $|m)$.
In order to have actions for all generators of $\so(2n+2)$, we only need to determine the action of $P=2h_{n+1}$.
We find the following expression:
\begin{align}
\hbox{in } W_+(p):\quad & P|m) = (-1)^n \left(2\sum_{i=1}^n (-1)^i m_{in} + p\right) \ |m), \label{P-action+}\\
\hbox{in } W_-(p):\quad & P|m) = -(-1)^n \left(2\sum_{i=1}^n (-1)^i m_{in} + p\right) \ |m). \label{P-action-}
\end{align}

\begin{prop}
$W_+(p)$ is the irreducible $\so(2n+2)$ representation with highest weight
$(\frac{p}{2}, \ldots,\frac{p}{2}, \frac{p}{2})$ under the actions~\eqref{a-action}-\eqref{P-action+}.
Similarly, $W_-(p)$ is the irreducible $\so(2n+2)$ representation with highest weight
$(\frac{p}{2}, \ldots,\frac{p}{2}, -\frac{p}{2})$ under the actions~\eqref{a-action}, \eqref{P-action-}.
\label{prop2}
\end{prop}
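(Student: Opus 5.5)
Since $\so(2n+2)$ is generated by the $a_i^\pm$ and $P$ (Proposition~\ref{prop1}), it is enough to verify two things. First, the operators defined by \eqref{a-action} and \eqref{P-action+} on $W(p)$ satisfy the triple relations \eqref{a-rels}--\eqref{aah}. Second, the resulting representation is irreducible with the stated highest weight.

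The relations \eqref{a-rels} hold already, because $W(p)$ is an $\so(2n+1)$ representation. Irreducibility is automatic: $W(p)$ is irreducible under the subalgebra generated by the $a_i^\pm$ alone. For the highest weight, note that the $\so(2n+1)$ Cartan elements $h_k$ are the same diagonal elements $e_{kk}-e_{n+k,n+k}$ inside $\so(2n+2)$, as one sees from $F_{kk}=\frac12[a_k^+,a_k^-]$ computed in the matrix realization. So the $\epsilon_1,\dots,\epsilon_n$ components of each weight are given by \eqref{h_k}, and the $\epsilon_{n+1}$ component is half the eigenvalue of $P$.

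At the vector with top row $(p,\dots,p)$ and all lower rows equal to $p$, the first $n$ weight components are all $p/2$. The eigenvalue in \eqref{P-action+} is then $(-1)^n(2p\sum_{i=1}^n(-1)^i+p)$. For $n$ even the sum vanishes, giving $p$; for $n$ odd the sum is $-1$, giving $(-1)(-2p+p)=p$. In both cases $h_{n+1}=p/2$. This is therefore the weight $(\frac p2,\dots,\frac p2)$ in $W_+(p)$ and $(\frac p2,\dots,-\frac p2)$ in $W_-(p)$. It is the unique dominant maximal weight, so by the branching rule of~\cite{Murnaghan} it identifies the representation.

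The real content is the verification of \eqref{ahh} and \eqref{aah} for the diagonal operator $P$. Since $P$ is diagonal with eigenvalue $\pi(m)$, we have
\begin{equation*}
(m'|[[P,a_j^\eta],P]|m) = -(\pi(m')-\pi(m))^2\,(m'|a_j^\eta|m).
\end{equation*}
Hence \eqref{ahh} is equivalent to the statement that $a_j^\eta$ only connects vectors with $\pi(m')-\pi(m)=\pm2$. Let $\sigma(m)=\sum_i(-1)^im_{in}$. From the explicit formulas in~\cite{parafermion}, $a_j^\pm$ changes exactly one entry of the top row $[m]^n$ by $\pm1$. Then $\sigma$ changes by $\pm1$, and $\pi$ by $\pm2$, as required. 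This is also what makes $P=2h_{n+1}$ consistent with $a_j^\pm=X_j+Y_j$ etc.\ being sums of two root vectors with $\epsilon_{n+1}$ components $\mp1$.

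The remaining relation \eqref{aah} is the main obstacle. To handle it, I would split $a_i^\xi=a_i^{\xi,\uparrow}+a_i^{\xi,\downarrow}$ according to whether $\sigma$ increases or decreases. Then $[a_i^\xi,P]$ becomes $\pm2(a_i^{\xi,\uparrow}-a_i^{\xi,\downarrow})$, with the overall sign fixed by $(-1)^n$, and \eqref{aah} becomes a quadratic identity among these pieces. Its diagonal part reduces to
\begin{equation*}
[a_i^{\xi,\uparrow},a_k^{\epsilon,\downarrow}]-[a_i^{\xi,\downarrow},a_k^{\epsilon,\uparrow}]\propto\delta_{ik}P.
\end{equation*}
Its off-diagonal parts require commutators such as $[a_i^{\xi,\uparrow},a_k^{\epsilon,\uparrow}]$ to vanish. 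I would try to avoid direct computation with the GZ matrix elements. The plan is to realize $W(p)$ inside the known $\so(2n+2)$ irrep $W_+(p)$, whose restriction to $\so(2n+1)$ is $W(p)$ by~\cite{Murnaghan}. In that irrep the $\epsilon_{n+1}$-grading gives exactly the above splitting, and $X_j$, $Y_j$ play the roles of the pieces. This reduces the claim to showing that the $h_{n+1}$-eigenvalue on the GZ vector $|m)$ is $\frac12(-1)^n(2\sigma(m)+p)$. I would prove that by induction along $a^+$-strings from the vacuum. The vacuum has top row $0$ and weight $-\frac p2$ in the $\epsilon_{n+1}$ slot up to the sign $(-1)^n$, matching the lowest weights listed above, and each $a_j^\pm$ step shifts the eigenvalue by $\pm1$ in step with $\sigma$. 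Finally, $W_-(p)$ follows from the outer automorphism $P\mapsto-P$, under which all the triple relations are invariant.
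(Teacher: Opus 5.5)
Your route differs from the paper's, and the reduction itself is sound. Take the abstract irreducible $\so(2n+2)$-module $L$ with highest weight $(\frac p2,\dots,\frac p2)$ and pull $h_{n+1}$ back through an $\so(2n+1)$-isomorphism $W(p)\cong L|_{\so(2n+1)}$. This makes \eqref{ahh}, \eqref{aah} and irreducibility automatic. The paper instead verifies \eqref{aah} directly from the explicit GZ matrix elements of \cite{parafermion}, reducing it to \eqref{case1}--\eqref{case3} and the identity \eqref{lagrange-id}.

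You also need, and do not state, why the pulled-back operator is diagonal on the GZ vectors. This follows from $[P,F_{ij}]=0$ and Schur's lemma, since $W(p)$ restricted to $\gl(n)$ contains each top row $\lambda$ with $\lambda_1\le p$ exactly once.

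The real gap is what comes after the reduction. The whole content of the proposition is then that the pulled-back eigenvalue equals \eqref{P-action+}, and you do not prove this. The sentence ``each $a_j^\pm$ step shifts the eigenvalue by $\pm1$ in step with $\sigma$'' \emph{is} the claim. From $a_j^+=X_j+Y_j$, or equivalently from \eqref{ahh}, you only learn that going from top row $\lambda$ to $\lambda_{+k}$ changes the $h_{n+1}$-eigenvalue by $+1$ or by $-1$. Nothing in your argument decides which sign occurs, i.e.\ whether the $\gl(n)$-component $\lambda_{+k}$ is reached by the $Y$-part or the $X$-part. In particular nothing shows that the shift is $(-1)^{n+k}$. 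The vacuum value only anchors the induction. Likewise, saying that the $\epsilon_{n+1}$-grading ``gives exactly the above splitting'' presupposes the result.

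The sign is not forced by weight-lattice or convexity considerations either. Take $n=2$, $p=4$ and top row $(2,2)$. Its only neighbours are $(2,1)$ and $(3,2)$, and both have $P$-eigenvalue $2$, so \eqref{ahh} allows either $0$ or $4$. Both $(0,0,0)$ and $(0,0,2)$ are weights of $L$. Deciding that the value is $4$ requires multiplicity or branching information.

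To close the gap you need one of two genuine inputs:
\begin{itemize}
\item the explicit matrix elements, as in the paper; or
\item the $\gl(n)\oplus\so(2)$ branching of $L$. This says the component with top row $\lambda$ has $h_{n+1}$-eigenvalue $\frac12\sum_{j=1}^p(-1)^{n+c_j}$, where $c_j$ are the column lengths of $\lambda$ (zero columns included). This equals $\frac12(-1)^n(2\sigma+p)$, and it must itself be proved, e.g.\ by a character computation or an explicit realization.
\end{itemize}
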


\begin{proof}
We will sketch the proof only for $W_+(p)$.
For this, it is sufficient to verify that the defining relations~\eqref{a-rels}, \eqref{ahh} and~\eqref{aah} are satisfied in $W_+(p)$.

First of all, note that for the vector $|0)$ (i.e.\ all $m_{ij}=0$ in~\eqref{mn}), \eqref{h_k} and \eqref{P-action+} yield
\begin{equation}
h_k|0)= -\frac{p}{2} |0),\quad h_{n+1}|0) = (-1)^n \frac{p}{2} |0),
\end{equation}
so $|0)$ is indeed the lowest weight vector of weight $(-\frac{p}{2}, \ldots,-\frac{p}{2}, (-1)^n\frac{p}{2})$.
For the vector $|p)$, which is our notation for the vector of type~\eqref{mn} with all $m_{ij}=p$, the above actions give
\begin{equation}
h_k|p)= \frac{p}{2} |p),\quad h_{n+1}|p) = \frac{p}{2} |p),
\end{equation}
so $|p)$ is the highest weight vector of weight $(\frac{p}{2}, \ldots,\frac{p}{2}, \frac{p}{2})$.

Next, let us turn to the actual verification of the actions of~\eqref{a-rels}, \eqref{ahh} and~\eqref{aah} in $W_+(p)$.

The verification of the action of~\eqref{a-rels} is straightforward, since we use the same basis vectors $|m)$ for $W_+(p)$ as for $W(p)$ in $\so(2n+1)$, and for $\so(2n+1)$ this was shown in~\cite{parafermion}.

Since the action of $P$ is diagonal, the verification of the action of~\eqref{ahh} is actually quite simple.
Recall that the matrix element $(m'|a_j^\pm|m)$ is nonzero only if
\begin{equation}
[m']^n = [m]^n_{\pm k} \equiv (m_{1n},\ldots, m_{kn}\pm 1,\ldots,m_{nn})
\end{equation}
for some $k$ in $\{1,2,\ldots,n\}$.
For $[m']^n=[m]^n_{\pm k}$, the matrix element of the left hand side of~\eqref{ahh} is equal to
\begin{equation}
(m'| [[P,a_j^\pm],P]|m)= (m'| 2P a_j^\pm P -a_j^\pm P^2-P^2 a_j^\pm |m).
\end{equation}
If we denote, for convenience, the eigenvalue \eqref{P-action+} of $P$ on $|m)$ as $(-1)^n x$, then we get
\begin{align}
(m'| [[P,a_j^\pm],P]|m)& =  (m'| 2 (x\pm 2(-1)^k)x a_j^\pm -x^2 a_j^\pm - (x\pm 2(-1)^k)^2 a_j^\pm |m)\nonumber\\
& = \left( 2 (x\pm 2(-1)^k)x -x^2 - (x\pm 2(-1)^k)^2 \right) (m'| a_j^\pm |m) \nonumber\\
& = -4 (m'| a_j^\pm |m),
\end{align}
corresponding indeed with the right hand side matrix element of~\eqref{ahh}.

The verification of the action of~\eqref{aah} is more involved, and uses the explicit matrix elements of $a_i^\xi$
determined in~\cite[Eq.~(4.25)-(4.28)]{parafermion}.
Using the notation~\eqref{bi} and the action of $a_i^\xi$ and $P$ on $|m)$,
one can deduce the following.
If 
\begin{equation}
a_i^\pm |m) = \sum_{m'} (m'|a_i^\pm|m) \, |m') \quad\hbox{ where }\quad [m']^n=[m]^n_{\pm j} \hbox{ for some }j,
\end{equation}
then
\begin{equation}
b_i^\pm |m) = \sum_{m'} (-1)^{n+j} (m'|a_i^\pm|m) \, |m') \quad\hbox{ where }\quad [m']^n=[m]^n_{\pm j}.
\end{equation}
So the verification of~\eqref{aah} reduces to verifying
\begin{align}
& [b_i^\xi,a_k^\epsilon]\, |m)=0 \quad\hbox{ for } i\ne k,\label{case1}\\
& [b_i^+,a_i^+] \, |m)=0 , \label{case2}\\
& [b_i^+,a_i^-] \, |m)= 2P\, |m),\label{case3}
\end{align}
since by the symmetry of the known matrix elements $(m'|a_i^\pm|m)$ this also implies that 
$[b_i^-,a_i^-] \, |m)=0$ and $[b_i^-,a_i^+] \, |m)= -2P\, |m)$.
Although it is far from trivial, \eqref{case2} follows from comparing 
the coefficients of equal states in the action of $b_i^+ a_i^+ \, |m)$ and $a_i^+b_i^+ \, |m)$.
Eq.~\eqref{case1} follows from similar considerations.
The computational hard equation to verify is~\eqref{case3}.
Also here, one compares the coefficients of equal states $|m')$ in the action of
$b_i^+a_i^- \, |m)$ and $a_i^-b_i^+ \, |m)$. 
When $|m')\ne |m)$, it follows from the symmetry properties of the explicit matrix elements~\eqref{a-action} that the 
contribution in $a_i^-b_i^+ \, |m)$ is the same as in $b_i^+a_i^- \, |m)$.
Then it remains to work out the diagonal terms, i.e.\ 
the coefficient of $|m)$ in $b_i^+a_i^- \, |m)$ and $a_i^-b_i^+ \, |m)$. 
Using the explicit actions of the parafermion operators in~\cite[Eq.~(4.25)-(4.28)]{parafermion},
such actions always lead to a sum of rational expressions, in which each factor is a linear expression in 
the labels $m_{ij}$. 
A careful analysis of these sums of rational expressions shows that they can always be simplified
using the following rational function identity:
\begin{equation}
\sum_{i=1}^N \ \frac{\displaystyle \prod_{j=0}^N (x_i-y_j)}{\displaystyle \prod_{\myatop{\scriptstyle j=0}{\scriptstyle j\ne i}}^N (x_i-x_j)} = \sum_{i=1}^N (x_i-y_i).
\label{lagrange-id}
\end{equation}
This identity can be proven either using the Lagrange interpolation formula~\cite[see the Appendix]{SVdJ2016}  or using the residue theorem~\cite[Eq.~(32)]{CMP94}.

It would take us too far in this paper to identity the rational expressions that are needed explicitly.
But our analysis has shown that, after simplifications using~\eqref{lagrange-id}, the remaining diagonal term
in $[b_i^+,a_i^-] \, |m)$ does indeed coincide with the action $2P\, |m)$. 
\end{proof}

Let us discuss some general properties of the parafermion parity operator $P$ in the 
$\so(2n+2)$ representation $W_+(p)$ (in $W_-(p)$ the properties are similar, since one can simply replace $P$
by $-P$ to go from $W_+(p)$ to $W_-(p)$).

Using \eqref{P-action+}, we find
\begin{equation}
P |0) = (-1)^n p\,|0),
\end{equation}
and
\begin{equation}
P |p) =  p\,|p),
\end{equation}
where $|p)$ is the highest weight state, consisting of the GZ-pattern~\eqref{mn} with all $m_{ij}=p$.
For other vectors $|m)$ of $W_+(p)$, thus with $m_{1n}\leq p$ and satisfying the betweenness conditions~\eqref{between},
it is easy to deduce from~\eqref{P-action+} that
\begin{equation}
P|m) = x\,|m), \quad\hbox{ where } x\in\{p,p-2,p-4,\ldots,-p+2,-p\}.
\end{equation}
Thus there is a close relation between the order of statistics $p$ and the properties of the parafermion parity operator $P$:
$P$ can assume $p+1$ values, ranging from $-p$ to $p$ in steps of~2.

For the simple case $p=1$, the basis vectors $|m)$ have GZ-patterns consisting of 0's and 1's only, with the 
appropriate conditions~\eqref{between}.
In that case, the action of the parafermion operators $a_i^\xi$ coincides with the action~\eqref{f-action} of fermion operators, under the identification of the GZ-basis vectors $|m)$ with the basis vectors~\eqref{theta}.
The correspondence is as follows: when $|m)$ is a basis vector of $W_+(1)$ with $|m)\ne |0)$, then put
\begin{equation}
\theta_k=\sum_{i=1}^k m_{ik} - \sum_{i=1}^{k-1} m_{i,k-1},
\end{equation}
and then
\begin{equation}
|m) = (-1)^{1+\sum_{i=1}^n m_{in}}\, |\theta_1,\theta_2,\ldots,\theta_n\rangle.
\end{equation}
For $|m)= |0)$, we simply have $|0)=|0,0,\ldots,0\rangle = |0\rangle$.
Note also that 
\begin{equation}
P (a_1^+)^{\t_1}\ldots (a_n^+)^{\t_n} |0\rangle = |\theta_1+\cdots+\theta_n|
 (a_1^+)^{\t_1}\ldots (a_n^+)^{\t_n} |0\rangle,
\end{equation}
so for $p=1$ (the fermion case) $P$ is the common fermion parity operator.

Let us also give a more explicit example, namely the action of all generators of $\so(6)$ ($n=2$) for $p=2$.
It is easy to see that the dimension of $W_+(2)$ is 10.
In Table~1, we list the basis vectors in the leftmost column, and the action of the generators (on top of the Table) on these basis vectors.

\begin{table}[ht]
\label{tab1}
\begin{tabular}{l|l|l|l|l|l|}
 & $a_1^+$ & $a_2^+$ & $a_1^-$ & $a_2^-$ & $P$ \\
\hline \phantom{$\displaystyle\int_a^b$}\hspace{-7mm} 
$\basis{0}{0}{0}$ & $\sqrt{2}\basis{1}{0}{1}$ & $\sqrt{2}\basis{1}{0}{0}$ & $0$ & $0$ & $2\basis{0}{0}{0}$ \\[3mm] 
\hline \phantom{$\displaystyle\int_a^b$}\hspace{-7mm} 
$\basis{1}{0}{0}$ & $\basis{2}{0}{1}-\basis{1}{1}{1}$ & $\sqrt{2}\basis{2}{0}{0}$ & $0$ & $\sqrt{2}\basis{0}{0}{0}$ & $0$ \\[3mm]
\hline \phantom{$\displaystyle\int_a^b$}\hspace{-7mm} 
$\basis{1}{0}{1}$ & $\sqrt{2}\basis{2}{0}{2}$ & $\basis{2}{0}{1}+\basis{1}{1}{1}$ & $\sqrt{2}\basis{0}{0}{0}$ & $0$ & $0$ \\[3mm]
\hline \phantom{$\displaystyle\int_a^b$}\hspace{-7mm} 
$\basis{1}{1}{1}$ & $\basis{2}{1}{2}$ & $\basis{2}{1}{1}$ & $-\basis{1}{0}{0}$ & $\basis{1}{0}{1}$ & $2\basis{1}{1}{1}$ \\[3mm]
\hline \phantom{$\displaystyle\int_a^b$}\hspace{-7mm} 
$\basis{2}{0}{0}$ & $-\sqrt{2}\basis{2}{1}{1}$ & $0$ & $0$ & $\sqrt{2}\basis{1}{0}{0}$ & $-2\basis{2}{0}{0}$ \\[3mm]
\hline \phantom{$\displaystyle\int_a^b$}\hspace{-7mm} 
$\basis{2}{0}{1}$ & $-\basis{2}{1}{2}$ & $\basis{2}{1}{1}$ & $\basis{1}{0}{0}$ & $\basis{1}{0}{1}$ & $-2\basis{2}{0}{1}$ \\[3mm]
\hline \phantom{$\displaystyle\int_a^b$}\hspace{-7mm} 
$\basis{2}{0}{2}$ & $0$ & $\sqrt{2}\basis{2}{1}{2}$ & $\sqrt{2}\basis{1}{0}{1}$ & $0$ & $-2\basis{2}{0}{2}$ \\[3mm]
\hline \phantom{$\displaystyle\int_a^b$}\hspace{-7mm} 
$\basis{2}{1}{1}$ & $-\sqrt{2}\basis{2}{2}{2}$ & $0$ & $-\sqrt{2}\basis{2}{0}{0}$ & $\basis{2}{0}{1}+\basis{1}{1}{1}$ & $0$ \\[3mm]
\hline \phantom{$\displaystyle\int_a^b$}\hspace{-7mm} 
$\basis{2}{1}{2}$ & $0$ & $\sqrt{2}\basis{2}{2}{2}$ & $\basis{1}{1}{1}-\basis{2}{0}{1}$ & $\sqrt{2}\basis{2}{0}{2}$ & $0$ \\[3mm]
\hline \phantom{$\displaystyle\int_a^b$}\hspace{-7mm} 
$\basis{2}{2}{2}$ & $0$ & $0$ & $-\sqrt{2}\basis{2}{1}{1}$ & $\sqrt{2}\basis{2}{1}{2}$ & $2\basis{2}{2}{2}$ \\[3mm]
\hline
\end{tabular}
\caption{Explicit action of $a_i^\pm$ and $P$  (top line) on a basis vector $\basis{m_{12}\,}{m_{22}}{m_{11}}$ (left column) of 
$W_+(2)$.}
\end{table}

Although this is a very easy example, it shows already some particular features where parafermions differ from fermions.
For instance,
\begin{equation}
P a_1^+a_2^+ |0\rangle = \sqrt{2} P \left( \basis{2}{0}{1}-\basis{1}{1}{1} \right)
= -2\sqrt{2} \left( \basis{2}{0}{1}+\basis{1}{1}{1} \right),
\end{equation}
so although $P$ is diagonal in the GZ-basis, it is not diagonal in the parafermion particle basis. 
On the other hand, we do have $P a_1^+ a_2^+ |0\rangle = -2 a_2^+ a_1^+ |0\rangle$ for this example.
The spectrum of $P$ can be read off in the last column of the table.

\setcounter{equation}{0}
\section{Parabosons as generators of $\osp(2|2n)=C(n+1)$}%

In the remaining part of this paper, we will study a set of boson operators, paraboson operators, the paraboson parity operator, and the underlying algebraic structures.

We use the same notation for (para)boson operators as for (para)fermion operators: it is clear from the context which operators we are dealing with.

A system of $n$ boson creation and annihilation operators $a_i^\pm$ ($i=1,\ldots,n$) is determined by the commutation relations
\begin{equation}
 [ a_i^+,a_j^+] = [ a_i^-,a_j^-] = 0, \quad [ a_i^-,a_j^+] = \delta_{ij}, \quad (i,j\in \{1,\ldots,n\}).
\label{1-aa}
\end{equation}
The boson Fock space is generated by a vacuum vector $|0\rangle$ satisfying
\begin{equation}
a_i^- |0\rangle =0\qquad (i\in \{1,\ldots,n\}),  \qquad \langle 0 | 0\rangle = 1,
\end{equation}
and the hermiticity conditions $(a_i^\pm)^\dagger = a_i^\mp$.
A set of orthonormal basis vectors of the Fock space is given by
\begin{equation}
|k\rangle = \frac{1}{\sqrt{k_1!\cdots k_n!}}|k_1,\ldots,k_n\rangle= (a_1^+)^{k_1}\ldots (a_n^+)^{k_n} |0\rangle,
\quad k_i\in\N=\{ 0,1,2\ldots \},
\label{k}
\end{equation}
and the explicit action of the operators $a_i^\pm$ on this basis reads:
\begin{align}
 a_i^- |k\rangle &=  \sqrt{k_i} |k_1,\ldots,k_{i-1},k_i-1,k_{i+1},\ldots,k_n\rangle \nonumber\\
 a_i^+ |k\rangle &=  \sqrt{k_i+1}|k_1,\ldots,k_{i-1},k_i+1,k_{i+1},\ldots,k_n\rangle.
\label{boson-action}
\end{align}
The number operator is $F=\sum_{i=1}^n a_i^+a_i^-$, and its action in the Fock space is
\begin{equation}
F |k\rangle = (\sum_{i=1}^n k_i)\, |k\rangle.
\end{equation}
The boson parity operator $P$ takes the form $P=(-1)^F$,
and has eigenvalue $+1$ on states with an even number of bosons, 
and $-1$ on states with an odd number of bosons.
It satisfies
\begin{equation}
\{P,P\}=2,\qquad \{ P, a_i^\pm \} =0\qquad (i\in \{1,\ldots,n\}).
\label{1-ha}
\end{equation}

As for fermions, we can write down the cubic or triple relations satisfied by $a_i^\pm$ and $P$.
When the set of $2n+1$ operators $a_i^\pm$ ($i=1,\ldots,n$) and $P$ satisfy the quadratic relations~\eqref{1-aa} and~\eqref{1-ha}, it is easy to verify that they also satisfy the following triple relations:
\begin{align}
& [\{a_{i}^{\xi}, a_{j}^{\eta}\}, a_{k}^{\epsilon}]=
(\epsilon -\eta) \delta_{jk} a_{i}^{\xi} + (\epsilon -\xi) \delta_{ik}a_{j}^{\eta}, 
\label{1-a-rels} \\
& [[P,a_j^\eta],P]=-4a_j^\eta, \label{1-ahh}\\
& \{[a_i^\xi,P],a^\epsilon_k\}= - (\epsilon-\xi) \delta_{ik} P, \label{1-aah}
\end{align}
where $i,j,k\in \{1,2,\ldots,n\}$ and $\eta, \epsilon, \xi \in\{+,-\}$.

Following again the approach of Green~\cite{Green}, 
let us consider a {\em new system} of $2n+1$ generators $a_i^\pm$ ($i=1,\ldots,n$) and $P$ 
which are no longer required to satisfy~\eqref{1-aa} and~\eqref{1-ha}, 
but are now required to satisfy the triple relations~\eqref{1-a-rels}-\eqref{1-aah}.
The elements satisfying~\eqref{1-a-rels} are known as paraboson creation and annihilation operators~\cite{Green}.
The element $P$ satisfying~\eqref{1-ahh}-\eqref{1-aah} will be referred to as the paraboson parity operator.

It has been established by Ganchev and Palev~\cite{Ganchev} that the Lie superalgebra 
generated by the odd elements $a_i^\pm$ ($i=1,\ldots,n$) subject to the triple relations~\eqref{1-a-rels} 
is the simple orthosymplectic Lie superalgebra $\osp(1|2n)=B(0,n)$.
Our purpose is to identify the algebraic structure determined by the $2n+1$ generators $a_i^\pm$ ($i=1,\ldots,n$) and $P$ subject to the relations~\eqref{1-a-rels}-\eqref{1-aah} (where $P$ is an even element).
We have the following result:
\begin{prop}
The Lie superalgebra generated by the $2n$ odd elements $a_i^\pm$ ($i=1,\ldots,n$) and the even element $P$, subject to the triple relations~\eqref{1-a-rels}-\eqref{1-aah} is the simple Lie superalgebra $\osp(2|2n)=C(n+1)$.
\label{prop3}
\end{prop}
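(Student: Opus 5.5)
The plan mirrors the proof of Proposition~\ref{prop1}, now in the super setting. First I build a spanning set of the Lie superalgebra generated by the odd $a_i^\pm$ and the even $P$. Because \eqref{1-a-rels}--\eqref{1-aah} are triple relations, every element is a linear combination of generators and of brackets of two generators. The independent quadratic elements are these. The odd elements are
\[
b_i^\xi=\tfrac12\xi[P,a_i^\xi] \qquad (2n \text{ of them}).
\]
The even elements are
\[
\{a_i^\xi,a_j^\eta\} \qquad (n(2n+1) \text{ of them}),
\]
spanning $\mathfrak{sp}(2n)$ as in the $\osp(1|2n)$ case. Since $[P,P]=0$, $P$ contributes no quadratic element. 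This gives $2n^2+n+1$ even and $4n$ odd basis elements, matching $\dim\osp(2|2n)=(2n^2+n+1\,|\,4n)$.

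Next I use only the triple relations and the super Jacobi identity to compute all brackets, in analogy with the table for $\so(2n+2)$. Relation \eqref{1-ahh} gives $[P,b_i^\xi]=2\xi a_i^\xi$. Relation \eqref{1-aah} gives $\{a_i^\xi,b_k^\epsilon\}$ as a multiple of $\delta_{ik}P$ when $\epsilon=-\xi$, and zero otherwise. The brackets $\{b_i^\xi,b_j^\eta\}$ come out proportional to $\{a_i^\xi,a_j^\eta\}$. Each such bracket can be reduced using $b=\frac12\xi[P,a]$ together with \eqref{1-ahh} and \eqref{1-aah}. This step shows that the span is closed, so the generated algebra has dimension at most $(2n^2+n+1|4n)$. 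It also shows that $P$ commutes with $\mathfrak{sp}(2n)$ and acts on the odd part with eigenvalues $\pm2$ on the combinations $a\pm b$.

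The identification is then made by an explicit matrix realization, which also gives the lower bound on the dimension. Take $\osp(2|2n)$ as supermatrices $X$ preserving the even supersymmetric form
\[
B=\begin{pmatrix}0&1&0&0\\1&0&0&0\\0&0&0&I_n\\0&0&-I_n&0\end{pmatrix},
\]
with the two even rows indexed first. Let $h_0=e_{11}-e_{22}$ be the Cartan element of the $\so(2)$ part, and put $P=2h_0$. Define $a_i^+=X_i+Y_i$ and $a_i^-=X_i'+Y_i'$, where $X_i,Y_i$ are the odd root vectors of roots $\delta_i-\epsilon$ and $\delta_i+\epsilon$, and $X_i',Y_i'$ have roots $-\delta_i\mp\epsilon$. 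I then check \eqref{1-a-rels}--\eqref{1-aah} by direct supermatrix computation, fixing signs and normalisations of the root vectors so that the constants come out right. These elements generate all of $\osp(2|2n)$. Indeed, $[P,a_i^\pm]$ separates $X_i$ from $Y_i$, and odd root vectors generate the whole superalgebra. Combining this surjection with the dimension bound gives an isomorphism. Simplicity is standard for $C(n+1)$.

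The main obstacle is the sign bookkeeping in the super setting: choosing the root vectors so that \eqref{1-a-rels} holds exactly with coefficients $(\epsilon-\eta)$, while \eqref{1-aah} involves an anticommutator with a minus sign. If the naive choice fails, I would first try rescaling $X_i'$ and $Y_i'$ by $\pm1$, or replacing $B$ by its variant with antisymmetric block $\begin{pmatrix}0&I_n\\-I_n&0\end{pmatrix}$ reordered, before reconsidering.
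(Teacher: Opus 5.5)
Your proposal is correct and follows the paper's route. You bound the dimension by the spanning set $a_i^\pm$, $P$, $b_i^\pm$, $\{a_i^\xi,a_j^\eta\}$, of total $2n^2+5n+1$, and then realize the relations in $\osp(2|2n)$ with $P=2h_0$ and each $a_i^\pm$ a combination of two odd root vectors; the paper's concrete sign choice is $a_i^+=-X_i'+Y_i'$, $a_i^-=X_i-Y_i$. Your surjectivity argument (using $[P,a_i^\pm]$ to separate the two root vectors, so the images generate $\osp(2|2n)$) is a step the paper leaves implicit, and it is exactly what turns the dimension count into an isomorphism.
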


For the proof, we follow the same ideas as the proof of Proposition~\ref{prop1}.
We first construct a basis of the Lie superalgebra generated by these $2n+1$ elements.
Since all relations are triple relations, we need to consider only linear and quadratic expressions in the generators.
We denote the independent quadratic elements as follows:
\begin{align}
& b_i^\xi = \frac12\xi\, [P,a_i^\xi] \qquad(\xi=\pm,\ i=1,\ldots,n), \label{1-bi} \\
& F_{ij}=\frac12 \{a_i^+,a_j^-\} \qquad (i,j=1,\ldots,n),\\
& P_{ij}=\frac12 \{a_i^+,a_j^+\} \qquad (1\leq i \leq j \leq n),\\
& N_{ij}=\frac12 \{a_i^-,a_j^-\} \qquad (1\leq i \leq j \leq n).
\end{align}
This brings the total number of basis elements ($a_i^\xi$, $P$, $b_i^\xi$, $F_{ij}$, $P_{ij}$, $N_{ij}$ respectively) of the Lie superalgebra to 
\begin{equation}
2n+1 + 2n + n^2 + n(n+1)/2 + n(n+1)/2 = 2n^2+5n+1,
\end{equation}
which is indeed the dimension of $\osp(2|2n)$.
The odd basis elements are $a_i^\pm$ and $b_i^\pm$, the others are even.
Using only the triple relations~\eqref{1-a-rels}, \eqref{1-ahh} and~\eqref{1-aah}, the brackets between all these basis elements can be determined:
\begin{align}
& [P,b^\xi_i]=2\xi a^\xi_i, \quad \{a_i^\xi,b_j^\xi\}=0, \quad \{a_i^\xi,b_j^{-\xi}\}=-\delta_{ij} P, \\
& \{b_i^\xi, b_j^\eta\} =-\xi\eta \{a_i^\xi,a_j^\eta\},\ (\sim 2 F_{ij}\hbox{ or } -2 P_{ij}\hbox{ or } -2 N_{ij}),\\
& [P,F_{ij}]=0, \quad  [P,P_{ij}]=0, \quad  [P,N_{ij}]=0, \label{PFij}\\
& [F_{ij}, a^+_k] = \delta_{jk} a^+_i, \quad [F_{ij}, a^-_k] = -\delta_{ik} a^-_j, \\
& [P_{ij}, a^+_k] = 0, \quad [P_{ij}, a^-_k] = -\delta_{jk}a^+_i -\delta_{ik} a^+_j, \\
& [N_{ij}, a^+_k] = \delta_{jk} a^-_i+\delta_{ik} a^-_j, \quad [N_{ij}, a^-_k] = 0, \\
& [F_{ij}, b^+_k] = \delta_{jk} b^+_i, \quad [F_{ij}, b^-_k] = -\delta_{ik} b^-_j, \\
& [P_{ij}, b^+_k] = 0, \quad [P_{ij}, b^-_k] = \delta_{jk}b^+_i +\delta_{ik} b^+_j, \\
& [N_{ij}, b^+_k] = -\delta_{jk} b^-_i-\delta_{ik} b^-_j, \quad [N_{ij}, b^-_k] = 0, \\
& [F_{ij},F_{kl}] = \delta_{jk}F_{il}-\delta_{il} F_{kj},\\
& [F_{ij},P_{kl}] = \delta_{jk}P_{il}+\delta_{jl} P_{ki},\\
& [F_{ij},N_{kl}] = -\delta_{ik}N_{jl}-\delta_{il} N_{kj},\\
& [N_{ij},P_{kl}] = \delta_{jk}F_{li}+\delta_{ik} F_{lj}+\delta_{jl} F_{ki}+\delta_{il}F_{kj},\\
& [N_{ij},N_{kl}] = [P_{ij}, P_{kl}]=0.
\end{align}
It is easy to see that the $b_i^\xi$ satisfy
\begin{equation}
[\{b_{i}^{\xi}, b_{j}^{\eta}\}, b_{k}^{\epsilon}]=
(\epsilon -\eta) \delta_{jk} b_{i}^{\xi} + (\epsilon -\xi) \delta_{ik}b_{j}^{\eta},
\end{equation}
so the $b_i^\xi$ are also parabosons.

Now we can verify Proposition~\ref{prop3}.
\begin{proof}
In order to show that the algebra under consideration is equal to $\osp(2|2n)$, let us work with the following matrix representation of $\osp(2|2n)$~\cite{Kac}:
\begin{equation}
X=\left( \begin{array}{cccc} f & 0 & x & x' \\ 0 & -f & y & y' \\ {y'}^t & {x'}^t & a & b \\ -{y}^t & -x^t & c & -a^t \end{array} \right),
\end{equation}
where $a$ is a $n\times n$ matrix, $b$ and $d$ are symmetric $n\times n$ matrices, $x,y,x',y'$ are $1\times n$ matrices, and $f$ is a number.
The even matrices have $x=x'=y=y'=0$, and the odd matrices have $a=b=c=0$ and $f=0$.
Using the common unit matrices $e_{ij}$, a basis of the Cartan subalgebra $\h$ of $\osp(2|2n)$ is given by the elements
\begin{equation}
h_{0}=e_{1,1}-e_{2,2}, \quad h_i=e_{2+i,2+i}-e_{2+n+i,2+n+i}\quad (1\leq i\leq n).
\end{equation}
Let us denote, as usual, the basis of $\h^*$ dual to $\{h_0, h_1, \ldots, h_{n}\}$ by $\{\epsilon_0,\epsilon_1,\ldots,\epsilon_{n}\}$.
Then the following elements are the root vectors of $\osp(2|2n)$:
\begin{equation}
\begin{array}{lll}
e_{2+j,2+k}-e_{2+n+k,2+n+j} & 1\leq j,k \leq n,\ j\ne k \qquad & \hbox{root: } \epsilon_j-\epsilon_k \\
e_{2+j,2+n+k}+e_{2+k,2+n+j} & 1\leq j\leq k \leq n & \hbox{root: } \epsilon_j+\epsilon_k \\
e_{2+n+j,2+k}+e_{2+n+k,2+j} & 1\leq j\leq k \leq n & \hbox{root: } -\epsilon_j-\epsilon_k \\
X_j:=e_{1,2+j}-e_{2+n+j,2} \qquad & 1\leq j \leq n & \hbox{root: } \epsilon_0-\epsilon_j\\
Y_j:=e_{2,2+j}-e_{2+n+j,1} & 1\leq j \leq n & \hbox{root: } -\epsilon_0-\epsilon_j\\
X_j':=e_{1,2+n+j}+e_{2+j,2} & 1\leq j \leq n & \hbox{root: } \epsilon_0+\epsilon_j\\
Y_j':=e_{2,2+n+j}+e_{2+j,1} & 1\leq j \leq n & \hbox{root: } -\epsilon_0+\epsilon_j
\end{array}
\end{equation}

It is once more a simple matrix computation to show that the following matrices
\begin{equation}
a^+_i=-X_i'+Y_i',\quad a^-_i=X_i-Y_i, \quad P=2h_{0}
\label{1-axy}
\end{equation}
satisfy the triple relations~\eqref{1-a-rels}, \eqref{1-ahh} and~\eqref{1-aah}. 
Hence, this proves Proposition~\ref{prop3}. 
\end{proof}

The generators $a_i^\xi$ of $\osp(2|2n)$ are {\em not} root vectors of $\osp(2|2n)$, but linear combinations of two root vectors; the extra generator $P$ is an element of the Cartan subalgebra.

Proposition~\ref{prop3} is also of mathematical interest, giving a presentation of the Lie superalgebra $\osp(2|2n)$ in terms of $2n+1$ generators subject to a set of triple relations~\eqref{1-a-rels}-\eqref{1-aah},
as an alternative to the Chevalley generators.

\setcounter{equation}{0}
\section{The $\osp(2|2n)$ paraboson Fock space}

The paraboson Fock space $V(p)$ is the Hilbert space with unique vacuum vector $|0\rangle$, 
defined by means of ($j,k=1,2,\ldots,n$)~\cite{Green, GM}
\begin{align}
& \langle 0|0\rangle=1, \qquad a_j^- |0\rangle = 0, \qquad (a_j^\pm)^\dagger = a_j^\mp,\nn\\
& \{a_j^-,a_k^+\} |0\rangle = p\,\delta_{jk}\,|0\rangle,
\label{1-pFock}
\end{align}
and by irreducibility under the action of the algebra generated by
the elements $a_j^+$, $a_j^-$ ($j=1,\ldots,n$), subject to the triple relations~\eqref{1-a-rels}.
The parameter $p$ (a positive integer) is known as the order of statistics of the paraboson system.
It is known that the paraboson Fock space $V(p)$ is the unitary irreducible representation of
$\osp(1|2n)$ with lowest weight $(\frac{p}{2}, \frac{p}{2},\ldots, \frac{p}{2})$.
A Gelfand-Zetlin basis for $V(p)$ was determined in~\cite{paraboson}, and consists of the vectors
\begin{equation}
 |m)\equiv |m)^n\equiv \left|
\begin{array}{lcllll}
 m_{1n} & \cdots & \cdots & m_{n-1,n} & m_{nn}  \\
 m_{1,n-1} & \cdots & \cdots &  m_{n-1,n-1}  &  \\
\vdots & \qdots & & & \\
m_{11} & & & &
\end{array}
\right) 
= \left| \begin{array}{l} [m]^n \\[2mm] |m)^{n-1} \end{array} \right) \;.
\label{1-mn}
\end{equation}
The notation of these patterns is the same as~\eqref{mn}, and the non-negative integers in~\eqref{1-mn} also 
satisfy the betweenness conditions~\eqref{between}.
The main difference with the vectors~\eqref{mn} is that in the present case the 
top line of $|m)$ is a partition $\lambda$ with at most $p$ nonzero parts, but no restriction on the parts themselves.
This implies that $V(p)$ is infinite-dimensional (just as the Fock space for bosons is infinite-dimensional).
The action of the Cartan algebra elements $h_k$ ($k=1,\ldots,n$) on these vectors of $\osp(1|2n)$ is~\cite[Eq.~(5.4)]{paraboson}:
\begin{equation}
h_{k}|m)=\left(\frac{p}{2}+\sum_{j=1}^k m_{jk}-\sum_{j=1}^{k-1} m_{j,k-1}\right)|m), 
\quad (1\leq k\leq n). \label{1-h_k} \\
\end{equation}
The matrix elements of the paraboson generators,
\begin{equation}
(m'| a_i^\xi | m),
\label{1a-action}
\end{equation}
are again complicated, and were determined explicitly in~\cite[Eq.~(A.9)-(A.12)]{paraboson}.

Let us investigate the possibility of extending these actions to a Fock representation of $\osp(2|2n)$ 
by keeping the same basis, in analogy of the parafermion case, and the same actions of $a_i^\pm$.
Recall that the GZ-basis vectors originate from the decomposition of $V(p)$ in $\gl(n)$ representations,
according to the branching $\osp(1|2n)\supset \gl(n)$, and where the $\gl(n)$ basis is given by the elements $F_{ij}$~\cite{paraboson}.
Since $P$ commutes with all $F_{ij}$, see~\eqref{PFij}, this implies that $P$ is diagonal in the GZ-basis $|m)$.
Let us therefore write
\begin{equation}
P |0\rangle =\lambda_0 |0\rangle, \quad\hbox{ and }\quad Pa_i^+|0\rangle=\lambda_i\, a^+_i|0\rangle.
\label{lambda}
\end{equation}
From Eq.~\eqref{1-aah} with $\xi=-\epsilon=+$ and $k=i$ acting on $|0\rangle$, one finds
\begin{equation}
a^-_ia^+_iP |0\rangle - a^-_iPa^+_i |0\rangle = 2P |0\rangle \quad\Leftrightarrow\quad
\lambda_i p = \lambda_0 (p-2).
\label{eq1}
\end{equation}
Using Eq.~\eqref{1-ahh} with $\eta=+$ and $j=i$ gives
\begin{equation}
2Pa^+_iP -a^+_iP^2 -P^2 a^+_i = -4 a^+_i.
\label{1-aahP}
\end{equation}
Acting on $|0\rangle$ leads to
\begin{equation}
2\lambda_0\lambda_i-\lambda_0^2-\lambda_i^2 =-4,\quad \hbox{ or } \quad (\lambda_i-\lambda_0)^2=4.
\label{eq2}
\end{equation}
The two equations lead to just two possible solutions for $\lambda_0$ and $\lambda_i$:
either $\lambda_0=p$ and $\lambda_i=p-2$, or else $\lambda_0=-p$ and $\lambda_i=-p+2$.
So there are at most two ways to extend the $\osp(1|2n)$ representations $V(p)$ to a representation of $\osp(2|2n)$ with the same basis as $V(p)$. (That it is actually possible to do this, will follow from later considerations.)

Let us denote the representation with $P|0)=p|0)$ by $V_+(p)$, and the one with $P|0)=-p|0)$ by $V_-(p)$.
Following~\eqref{1-axy} and~\eqref{1-h_k}, $V_+(p)$ is a lowest weight representation of $\osp(2|2n)$ with lowest weight  
$(\frac{p}{2}, \frac{p}{2},\ldots, \frac{p}{2})$, 
and $V_-(p)$ is a lowest weight representation of $\osp(2|2n)$ with lowest weight
$(-\frac{p}{2}, \frac{p}{2},\ldots, \frac{p}{2})$.

We claim that both $V_+(p)$ and $V_-(p)$ are representations of $\osp(2|2n)$, both with the same basis
vectors $|m)$ as for $V(p)$, and with the same action of $a_i^\pm$ as for $V(p)$, 
but with the following action of $P=2h_0$:
\begin{align}
\hbox{in } V_+(p):\quad & P|m) = \left(\sum_{i=1}^n (-1)^{m_{in}} + p-n\right) \ |m), \label{1P-action+}\\
\hbox{in } V_-(p):\quad & P|m) = - \left(\sum_{i=1}^n (-1)^{m_{in}} + p-n\right) \ |m). \label{1P-action-}
\end{align}

\begin{prop}
$V_+(p)$ is the irreducible $\osp(2|2n)$ representation with lowest weight
$(\frac{p}{2}, \ldots,\frac{p}{2}, \frac{p}{2})$ under the actions~\eqref{1a-action} and~\eqref{1P-action+}.
Similarly, $V_-(p)$ is the irreducible $\osp(2|2n)$ representation with lowest weight
$(-\frac{p}{2}, \frac{p}{2}, \ldots, \frac{p}{2})$ under the actions~\eqref{1a-action} and \eqref{1P-action-}.
\label{prop4}
\end{prop}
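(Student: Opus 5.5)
The plan mirrors the proof of Proposition~\ref{prop2}: since Proposition~\ref{prop3} gives a presentation of $\osp(2|2n)$ by the generators $a_i^\pm$ and $P$ subject to \eqref{1-a-rels}--\eqref{1-aah}, it suffices to show that these relations hold on $V_\pm(p)$ with the actions \eqref{1a-action} and \eqref{1P-action+} (respectively \eqref{1P-action-}). We only treat $V_+(p)$; for $V_-(p)$ one replaces $P$ by $-P$, which leaves all three relations invariant. Relation \eqref{1-a-rels} holds because the $a_i^\pm$ act exactly as in the $\osp(1|2n)$ Fock space $V(p)$ of~\cite{paraboson}.

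Irreducibility is immediate, since the space is already irreducible under the subalgebra $\osp(1|2n)$ generated by the $a_i^\pm$. For the lowest weight, on $|0)$ formula \eqref{1P-action+} gives $P|0)=(n+p-n)|0)=p|0)$, so $h_0|0)=\frac p2|0)$. Together with \eqref{1-h_k} this yields lowest weight $(\frac p2,\ldots,\frac p2)$, as required and consistent with the analysis around \eqref{eq1}--\eqref{eq2}.

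For \eqref{1-ahh} we use that $(m'|a_j^\pm|m)\neq0$ only when $[m']^n=[m]^n_{\pm k}$ for some $k$. Changing one $m_{kn}$ by $\pm1$ flips the sign of $(-1)^{m_{kn}}$, so the eigenvalue of $P$ changes from $x$ to $x\pm 2$ with some sign. The diagonal computation used for \eqref{ahh} then gives $2x'x-x^2-x'^2=-(x'-x)^2=-4$ on every nonzero matrix element, which is \eqref{1-ahh}. As a byproduct, we record $b_i^\pm|m)=\sum_{m'}\sigma(m,m')(m'|a_i^\pm|m)|m')$ with an explicit sign $\sigma=\pm1$. Concretely, for $[m']^n=[m]^n_{+k}$ we have $\sigma=(-1)^{m_{kn}+1}$, coming from $\frac12[P,a_i^+]$ with the eigenvalue difference $-2(-1)^{m_{kn}}$.

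The main work is \eqref{1-aah}, which in the $b$-notation \eqref{1-bi} reduces to
\[
\{b_i^\xi,a_k^\epsilon\}|m)=0\ (i\ne k\ \text{or}\ \xi=\epsilon),\qquad \{b_i^+,a_i^-\}|m)=-P|m),
\]
up to the normalisation fixed by \eqref{1-bi}. The remaining case $\{b_i^-,a_i^+\}$ then follows by hermiticity and the symmetry of the matrix elements. We expand both orderings in the GZ basis using the explicit matrix elements~\cite[Eq.~(A.9)-(A.12)]{paraboson} and compare coefficients of each $|m')$.

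For off-diagonal targets $|m')$, we pair the two intermediate paths, raising $m_{kn}$ then lowering $m_{ln}$ versus the reverse. The products of matrix elements agree, as in the $\osp(1|2n)$ anticommutator identities, while the signs $\sigma$ differ by exactly the factor needed for cancellation (or agreement) in the anticommutator. This is a parity bookkeeping argument on $m_{kn},m_{ln}$.

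The diagonal coefficient of $|m)$ is the main obstacle. It is a sum over $k$ of $\pm|(m_{+k}|a_i^+|m)|^2$ and $\pm|(m_{-k}|a_i^-|m)|^2$, weighted by $(-1)^{m_{kn}}$. These are rational functions in the GZ labels, and the sum must collapse to $-(\sum_i(-1)^{m_{in}}+p-n)$. The first thing to try is to split the $k$'s by the parity of $m_{kn}$ and shift variables so that each parity class takes the form of the Lagrange identity \eqref{lagrange-id}, proceeding recursively down the GZ pattern as in~\cite{paraboson}. The hard part is organising the even and odd $m_{kn}$ terms so that the identity applies separately.
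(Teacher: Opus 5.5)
Your proposal is correct and follows essentially the same route as the paper. Both arguments reduce the claim to checking the triple relations on the GZ basis, inherit \eqref{1-a-rels} from $V(p)$, and obtain \eqref{1-ahh} from the diagonal action of $P$, whose eigenvalue shifts by $-2(-1)^{m_{kn}}$. Both then handle \eqref{1-aah} through the sign formula $b_i^+|m)=\sum (-1)^{m_{kn}+1}(m'|a_i^+|m)\,|m')$, using symmetry of the matrix elements for the off-diagonal coefficients and the identity \eqref{lagrange-id}, split by the parities of the top-row labels, for the diagonal ones. Like the paper, you leave those explicit matrix-element computations unexecuted, so your argument is a sketch at the same level of detail as the published proof.
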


\begin{proof}
Let us give the proof for $V_+(p)$.
For this, it is sufficient to verify that the defining relations~\eqref{1-a-rels}, \eqref{1-ahh} and~\eqref{1-aah} are satisfied in $V_+(p)$.

The verification of the action of~\eqref{1-a-rels} is straightforward, since we use the same basis vectors $|m)$ for $V_+(p)$ as for $V(p)$ in $\osp(1|2n)$, and for $\osp(1|2n)$ this was shown in~\cite{paraboson}.

Since the action of $P$ is diagonal, the verification of the action of~\eqref{1-ahh} is again simple.
Recall that (similar as for parafermions) the paraboson matrix element $(m'|a_j^\pm|m)$ is nonzero only if
\begin{equation}
[m']^n = [m]^n_{\pm k} \equiv (m_{1n},\ldots, m_{kn}\pm 1,\ldots,m_{nn})
\end{equation}
for some $k$ in $\{1,2,\ldots,n\}$.
For $[m']^n=[m]^n_{\pm k}$, the matrix element of the left hand side of~\eqref{1-ahh} is equal to
\begin{equation}
(m'| [[P,a_j^\pm],P]|m)= (m'| 2P a_j^\pm P -a_j^\pm P^2-P^2 a_j^\pm |m).
\end{equation}
If we denote the eigenvalue \eqref{1P-action+} of $P$ on $|m)$ as $x$, we get
\begin{align}
(m'| [[P,a_j^\pm],P]|m)& =  (m'| 2 (x-2(-1)^{m_{kn}})x a_j^\pm -x^2 a_j^\pm - (x-2 (-1)^{m_{kn}})^2 a_j^\pm |m)\nonumber\\
& = \left( 2 (x- 2(-1)^{m_{kn}})x -x^2 - (x-2 (-1)^{m_{kn}})^2 \right) (m'| a_j^\pm |m) \nonumber\\
& = -4 (m'| a_j^\pm |m),
\end{align}
corresponding with the right hand side matrix element of~\eqref{1-ahh}.

The verification of the action of~\eqref{1-aah} is much more involved, and use the explicit paraboson 
matrix elements of $a_i^\xi$ in $V(p)$, determined in~\cite[Eq.~(A.9)-(A.12)]{paraboson}.

As a first step, using the notation~\eqref{1-bi} and the action of $a_i^\xi$ and $P$ on $|m)$,
one can deduce the following.
If 
\begin{equation}
a_i^\pm |m) = \sum_{m'} (m'|a_i^\pm|m) \, |m') \quad\hbox{ where } \quad [m']^n=[m]^n_{\pm j} \hbox{ for some }j,
\end{equation}
then
\begin{equation}
b_i^\pm |m) = \sum_{m'} (-1)^{m_{jn}+1} (m'|a_i^\pm|m) \, |m') \quad\hbox{ where }\quad [m']^n=[m]^n_{\pm j}.
\end{equation}
The actual verification now follows similar steps as the corresponding proof of Proposition~\ref{prop2},
except that it is split in various parts according to whether the GZ-labels of the top line of $|m)$ are even 
or odd. Essentially, the main identity to use is again~\eqref{lagrange-id}, in various disguised forms.
We omit the technical details.
\end{proof}

Let us discuss some general properties of the paraboson parity operator $P$ in the 
$\osp(2|2n)$ representation $V_+(p)$ (in $V_-(p)$ the properties are similar, since one can simply replace $P$
by $-P$ to go from $V_+(p)$ to $V_-(p)$).

Using \eqref{1P-action+}, we find
\begin{equation}
P |0) =  p\,|0).
\end{equation}
In general, from the action~\eqref{1P-action+} it is easy to see that 
\begin{equation}
P|m) = x\,|m), \quad\hbox{ where } x\in\{p,p-2,p-4,\ldots,-p+2,-p\},
\end{equation}
for all allowed GZ-patterns~\eqref{mn} in $V_+(p)$.
Thus there is again a close relation between the order of statistics $p$ 
and the properties of the paraboson parity operator $P$:
$P$ can assume $p+1$ values, ranging from $-p$ to $p$ in steps of~2.

For the simple case $p=1$, the basis vectors $|m)$ have GZ-patterns consisting of 
rows of the form $[m]^j=[m_{1j},0,\ldots,0]$ for each $j$, because of the betweenness conditions~\eqref{between}
and the fact that the length of the partition $(m_{1n},m_{2n},\ldots,m_{nn})$ should be at most $p=1$.
In that case, the action of the paraboson operators $a_i^\xi$ coincides with the action~\eqref{boson-action} of boson operators, under the identification of the GZ-basis vectors $|m)$ with the basis vectors~\eqref{k}.
The correspondence is as follows: 
\begin{equation}
|k_1,k_2,\ldots,k_n\rangle = 
\left|
\begin{array}{lcllll}
k_1+\cdots+k_{n-1}+k_n & 0 & \cdots & 0 & 0  \\
k_1+\cdots+k_{n-1} & 0 & \cdots &  0  &  \\
\vdots & \qdots & & & \\
k_1+k_2 & 0 & & &\\
k_1 & & & &
\end{array}
\right) .
\end{equation}
Of course, in that case~\eqref{1P-action+} gives as eigenvalue
\begin{align}
P|k_1,\ldots,k_n\rangle &= \left(p-n +(-1)^{k_1+\cdots+k_n}+(-1)^0+\cdots+(-1)^0\right) |k_1,\ldots,k_n\rangle\nonumber\\
&= (-1)^{k_1+\cdots+k_n} |k_1,\ldots,k_n\rangle,
\end{align}
and thus $P$ coincides with the boson parity operator.

\setcounter{equation}{0}
\section{Discussion}%

First of all, we want to emphasize the algebraic results of this paper.
Since the early days of Lie algebra theory, it was established that a simple Lie algebra has a nice description 
as an algebra generated by a set of generators subject to a set of relations.
For a simple Lie algebra of rank $n$ it is common to use the $3n$ Chevalley generators 
and the Chevalley-Serre relations. 
These Serre relations are in general cubic relations in the generators (except for $G_2$, 
where some relations are of degree~4).
Also for a basic classical Lie superalgebra of rank~$n$, there is a description in terms of 
$3n$ Chevalley generators and Chevalley-Serre relations~\cite[Section~2.44]{Frappat}. 
Apart from these classical descriptions, some others were known for certain simple Lie (super)algebras, mainly in the 
context of ``creation and annihilation operators''.
As mentioned earlier in this paper, the Lie algebra $B_n$ can be generated by $2n$ creation and 
annihilation operators (parafermions) subject to a set of simple triple (or cubic) relations.
Similary, the Lie superalgebra $B(0,n)$ can be generated by $2n$ creation and 
annihilation operators (parabosons) subject to a set of simple triple relations.
The same property has been studied for the Lie algebra $A_n$~\cite{Palev1977,Jellal} 
and the Lie superalgebra $A(0,n)$~\cite{PSVdJ}.
Several years ago we extended the description of simple Lie algebras and Lie superalgebras by means of $N$ creation
and $N$ annihilation operators as generators, subject to a set of triple relations~\cite{SVdJ2005LA, SVdJ2005LS}.
However, apart from the known cases ($B_n$, $B(0,n)$, $A_n$ and $A(0,n)$), this number $N$ was always larger than
the rank $n$ of the algebra (usually of the order $2n$).
In those studies~\cite{SVdJ2005LA, SVdJ2005LS}, we assumed that the $2N$ creation and annihilation operators should coincide with root vectors of the algebra under consideration.
Now, as a result of the current analysis, we have a description of the Lie algebra $D_{n+1}$ in terms of 
only $2n+1$ generators ($2n$ creation and annihilation operators and one element of the Cartan subalgebra), subject
to a set of triple relations. The creation and annihilation operators however are not root vectors, but a sum of two root
vectors.
Secondly, we have a description of the Lie superalgebra $C(n+1)$ in terms of 
only $2n+1$ generators (again $2n$ creation and annihilation operators and one element of the Cartan subalgebra), subject
to a different set of triple relations. 

As far as physical properties is concerned, we believe that the results of this paper should be of potential interest.
Indeed, in recent years there have been many studies on applications of parafermions or parabosons in 
various areas, such as dark matter and dark energy~\cite{NKMS, KY}, condensed matter physics~\cite{Safonov, WangHazzard}, 
thermodynamics~\cite{Hama, SVdJ2020} and optics~\cite{RJR}.
Proofs about the theoretical detectability of parabosons and parafermions exchanged under the permutation group have been given~\cite{T1, T2, T3, WangHazzard}.
Also experimental realizations of parafermions and parabosons have been proposed~\cite{Huerta1,Huerta2}. 
So far, the complicated structure of Fock spaces $W(p)$ for parafermions and Fock spaces $V(p)$ for parabosons
restrained researchers of using these in physical models, apart from some simple examples or small values for~$p$.
The operator $P$ introduced in this paper, however, does have simple properties in the Fock spaces, and
does also have a very simple spectrum $\{-p,-p+2,\ldots,p\}$
both in $W(p)$ and $V(p)$.
Moreover, it coincides with the usual parity operator for $p=1$, when parafermions (resp.\ parabosons) coincide 
with fermions (resp.\ bosons).
This was the reason to speak of $P$ as the parafermion or paraboson parity operator.
We hope that the simplicity of $P$ and its spectrum helps to bridging the gap from theory to applications
for paraparticles.

\section*{Acknowledgments}
Both authors were supported by the Bulgarian National Science Fund, grant KP-06-N88/3.

\end{document}